\newtheorem{theorem}{Theorem}
\newtheorem*{theorem*}{Theorem}
\newtheorem{proposition}[theorem]{Proposition}
\theoremstyle{definition}
\newtheorem{definition}[theorem]{Definition}
\theoremstyle{remark}
\newtheorem*{remark}{Remark}
\newcommand{\intersect}{\ensuremath{\cap}}
\newcommand{\Int}{\mathop{\text{Int}}}
\renewcommand{\split}[2]{#1_{{}\ltimes #2}}
\newcommand{\Relax}[1]{\ensuremath{\mathrm{relax}(#1)}}
\newcommand{\vinterp}[1]{\llbracket #1 \rrbracket_e}
\newcommand{\binterp}[1]{\llbracket #1 \rrbracket_!}
\newcommand{\minterp}[1]{\llbracket #1 \rrbracket_p}
\newcommand{\figureline}{\rule{\textwidth}{0.5pt}}
\newcommand{\figureend}{\rule{\textwidth}{0.5pt}}
\newcommand{\iso}{\cong}
\newcommand{\isomorphism}{\cong}
\newcommand{\denote}[1]{
\llbracket #1 \rrbracket} 
\newcommand{\name}[1]{
\ulcorner #1 \urcorner}
\newcommand{\coname}[1]{%
\llcorner #1 \lrcorner}
\newcommand{\sizeof}[1]{
  \left|#1\right|}
\newcommand{\dom}{\operatorname{dom}}
\newcommand{\cod}{\operatorname{cod}}
\newcommand{\catC}{\ensuremath{{\cal C}}\xspace}
\newcommand{\id}[1]{\ensuremath{\mathrm{id}_{#1}}}
\newcommand{\catSet}{
\ensuremath{\textbf{Set}}\xspace}
\newcommand{\inlinegraphic}[2]{
  \dimendef\grafheight=255\dimendef\grafvshift=254
  \grafheight=#1
  \grafvshift=-0.5\grafheight
  \advance\grafvshift by 0.5ex
  \raisebox{\grafvshift}{\includegraphics[height=\grafheight]{images/#2}\xspace}
}
\begin{document}
\title{Graphical Reasoning in Compact Closed Categories for Quantum Computation}

\author{Lucas Dixon\thanks{\texttt{l.dixon@ed.ac.uk}}\\
  University of Edinburgh
  \and 
  Ross Duncan\thanks{\texttt{ross.duncan@comlab.ox.ac.uk}}\\
  University of Oxford 
}


\maketitle

\begin{abstract}
  Compact closed categories provide a foundational formalism for a
  variety of important domains, including quantum computation.  These
  categories have a natural visualisation as a form of graphs.  We
  present a formalism for equational reasoning about such graphs and
  develop this into a generic proof system with a fixed logical kernel
  for equational reasoning about compact closed categories.
  Automating this reasoning process is motivated by the slow and error
  prone nature of manual graph manipulation. A salient feature of our
  system is that it provides a formal and declarative account of
  derived results that can include `ellipses'-style notation. We
  illustrate the framework by instantiating it for a graphical
  language of quantum computation and show how this can be used to
  perform symbolic computation.  
\end{abstract}

\noindent Keywords: graph rewriting, quantum
    computing, categorical logic, interactive theorem proving,
    graphical calculi, ellipses notation.

\section{Introduction}
\label{sec:introduction}

Recent work in quantum computation has emphasised the use of graphical
languages motivated by the underlying logical structure of quantum
mechanics
itself~\cite{AbrCoe:CatSemQuant:2004,Selinger:dagger:2005,Coecke2005Kindergarten-Qu,Coecke2006POVMs-and-Naima,Coecke2006Quantum-Measure}.
These techniques have a number of advantages over the conventional
matrix-based approach to quantum mechanics:

\begin{itemize}
\item The visual representation abstracts over the values in the
  matrices. This removes detail that is difficult or tedious for a
  human to interpret. 

\item Many properties have a natural graphical representation. For
  example, separability of quantum states can be inferred from disjoint subgraphs.

\item The algebra of graphs generalises to domains other than
  vector spaces:  it provides a representation for
  compact closed categories~\cite{KelLap:comcl:1980}. 

\end{itemize}

A major problem with these graphical representations is the lack of
machinery for automating their manipulation. The main contribution of
this paper is a graph-based formalism that is suitable for
representing and reasoning about compact closed categories with
additional equational structure. This has a wide variety applications
including reasoning about relations, stochastic processes, and
synchronous processes. In this paper we introduce the representation,
develop it into a formal proof system, and highlight its application
for symbolic reasoning about quantum computation.

We begin by presenting a graphical model of quantum computation.  This
model displays the typical features of a graphical calculi: quantum
processes are represented by graphs built up from basic elements.
Non-structural equivalences are captured by equations between graphs.
An important result in this calculus is the Spider Theorem which takes
the form of an equation between graphs involving informal ellipses
notation (see \S\ref{sec:informal-spider}).

The formalisation, in a graphical form, of rules containing ellipses
notation and the corresponding reasoning with such rules requires an
extension of the graphical calculus that eventually forms {\em graph
  patterns}. We develop this by first defining a formalism for graphs,
their transformations, and an appropriate subgraph relation.  An
important difference between this approach and standard texts on graph
theory lies in the notion of subgraph. We view vertices as operations
which have types corresponding to their incident edges and thus we do
not allow additional edges in a subgraph.  We introduce a general form
of graph combination, called \emph{plugging}, which includes both
parallel and sequential composition as special cases.  Since redexes
are preserved by plugging, this gives a compositional account of
equational reasoning for compact closed categories. We prove that our
graph-based formalism is a faithful representation the of free compact
closed category generated from its basic elements. We also introduce a
general formalism for ellipses notation in graphs which forms {\em
  !-box graphs}. By combining {\em !-box graphs} with our
compositional graph formalism, we provide a suitable representation
for {\em graph patterns} that can formally represent and reason with
rules derived from the Spider Theorem.

Using our graph-based formalism as the representational foundation, we
develop a simple logical framework for manipulating models of compact
closed categories. This has a suitable rewriting mechanism where the
axioms of the underlying object-formalism are expressed as equations
between graphs. We then present a short case study that illustrates
the framework by instantiating it for the introduced model of quantum
computation.  This shows how the framework can be used to symbolically
perform simplifications of quantum programs as well as simulate
computations.







\section{Quantum Computations as Graphs}
\label{sec:exampl-quant}

In this section we will describe a set of generators and equations
used to reason about quantum computation, and show how some of its
formal properties lead to particular issues for the development of
reasoning machinery.

Initiated in~\cite{AbrCoe:CatSemQuant:2004}, a substantial strand of
work in quantum informatics has involved the development of high-level
models of quantum processes based on compact closed categories. In
these formalisms, quantum processes---such as quantum logic gates, or
the measurement of a qubit---correspond to arrows in the category,
while the different quantum data types, usually just arrays of qubits,
are the objects. 

In terms of the graphical language, a recent account is described by
Coecke and Duncan~\cite{Coecke:2008jo}. This is based on providing a
graphical language for compact closed categories, described
in~\S\ref{sec:comp-clos-categ}. This account allows edges to represent
qubits and, in particular, the domain and codomain edges represent the
inputs and outputs respectively of a quantum process\footnote{In this
  account, no interpretation of edge direction is needed as objects in
  the underlying categorical model are self dual.}.  Internally,
several edges may represent the same physical qubit at different
times. An edge may even represent a ``virtual'' qubit which stands for
a correlation between different parts of the system.  Coloured nodes
(a light green and a darker red) are used to denote two families of
operations on qubits, expressed graphically as the following
generators:

\begin{gather*}
  \epsilon_Z = \inlinegraphic{1.5em}{epsilon} \qquad
  \delta_Z = \inlinegraphic{1.5em}{delta} \qquad
  \epsilon_Z^\dag = \inlinegraphic{1.5em}{epsilondag} \qquad
  \delta_Z^\dag = \inlinegraphic{1.5em}{deltadag} \qquad
  \alpha_Z = \inlinegraphic{1.5em}{greenalpha} \qquad
\\  
  \epsilon_X = \inlinegraphic{1.5em}{redepsilon} \qquad
  \delta_X = \inlinegraphic{1.5em}{reddelta} \qquad
  \epsilon_X^\dag = \inlinegraphic{1.5em}{redepsilondag} \qquad
  \delta_X^\dag = \inlinegraphic{1.5em}{reddeltadag} \qquad
  \alpha_X = \inlinegraphic{1.5em}{redalpha} \qquad
\end{gather*}

\noindent where $\alpha \in [0,2\pi)$. The $\delta_Z$ and $\epsilon_Z$
represent quantum operations which respectively copy and delete the
eigenstates of the Pauli $Z$ operator.\footnote{Uniform copying
  operations are forbidden by the no-cloning theorem
  \cite{Wootters1982A-single-quantu}, but such operations are possible
  if we demand only the eigenstates of some self-adjoint operator to
  be copied.  Other states will not not copied.  The same remarks hold
  true for erasing \cite{Pati2000Impossibility-o}.} In addition, we
have \inlinegraphic{1.5em}{H} which represents a Hadamard gate.
Notice that $\delta_Z$ has one edge in its domain for the qubit to be
copied, and two edges in its codomain for the two copies it produces.
Similarly, $\epsilon_Z$ has one qubit as input and no outputs.  The
adjoints $\delta^\dag_Z$ and $\epsilon^\dag_Z$ correspond to an
operation known as \emph{fusion}, and to the operation of preparing a
fresh qubit in a certain state.  The $\alpha_Z$ corresponds to phase
shift of angle $\alpha$ in the $Z$ direction. The family of maps
indexed by $X$ are defined in exactly the same way, but relative to
the Pauli $X$ operator rather than the $Z$.

The free compact closed category is then given by all graphs formed by
composing and tensoring these basic graphs. All quantum operations may
be defined by combining these simple operations---which are
essentially classical---on two complementary observables.

We emphasise that this is a notation for representing quantum
processes, not just quantum states. In this setting a state is simply
a process with no inputs; that is, a concrete graph with empty domain.
Since our formalism is based on the underlying mathematical structure
rather than any particular model of quantum computation, it is capable
of representing quantum circuits, measurement-based quantum
computations, as well as other models. Indeed, an important
application of this work is to show that states or computations
implemented differently are equivalent.

The beauty of graphical calculi for compact closed categories is that
equations which hold for general algebraic reasons are absorbed into
the notation.  However in order to represent 
quantum computation, generic structure will not suffice: we need
additional equations between graphs. In the system we present here,
these describe the interaction between complementary 
observables and allow equivalent computations to be proved equivalent.
The equations are discussed in detail in \cite{Coecke:2008jo} and are
presented here graphically in Figure~\ref{fig:graph-quant-eqns}.

\begin{figure}[ptbh]
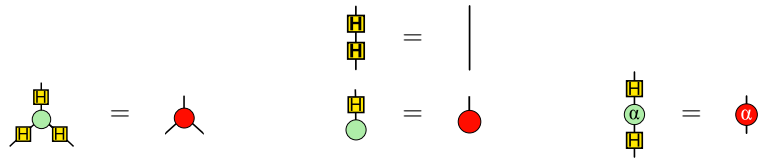

  \figureline
  \begin{description}
  \item[Comonoid Laws] 
    \[
    \begin{array}{ccccccccccccccc}
      \inlinegraphic{2.5em}{comonoid-assoc1}  &=\;&
      \inlinegraphic{2.5em}{comonoid-assoc2}
      &\qquad&
      \inlinegraphic{2.5em}{comonoid-unit1}  &=\;&
      \inlinegraphic{2.5em}{comonoid-unit2}  &=\;&
      \inlinegraphic{2.5em}{comonoid-unit3} 
      &\qquad&
      \inlinegraphic{2.5em}{comonoid-comm1}  &=\;&
      \inlinegraphic{2.5em}{comonoid-comm2}
    \end{array}
    \]
  \item[Isometry, Frobenius, and  Compact Structure]
    \[
    \begin{array}{cccccccccccccccc}
      \inlinegraphic{2.5em}{isometry1}  &=\;&
      \inlinegraphic{2.5em}{isometry2}  
      &\qquad\qquad &
      \inlinegraphic{2em}{frobenius1}  &=\;&
      \inlinegraphic{2em}{frobenius2}  
      &\qquad\qquad&
      \inlinegraphic{2em}{compact1}  &=\;&
      \inlinegraphic{1.5em}{compact2}  
    \end{array}
    \]
  \item[Abelian Unitary Group and Bilinearity]
    \[
    \begin{array}{ccccccccccccccccccccccccc}
      \inlinegraphic{2.2em}{group1}  &:=\;&
      \inlinegraphic{2.2em}{group2}  &=\;&
      \inlinegraphic{2.2em}{group3}  
      &\qquad\quad&
      \inlinegraphic{3.5em}{group6}  &=\;&
      \inlinegraphic{3.5em}{group7}  &=\;&
      \inlinegraphic{3.5em}{group8} 
    \end{array}
    \]
    \[
    \begin{array}{ccccc}
      \inlinegraphic{3.5em}{alpha-commute1}  &=\;&
      \inlinegraphic{3.5em}{alpha-commute2}  &=\;&
      \inlinegraphic{3.5em}{alpha-commute3} 
    \end{array}
    \]
  \item[Bialgebra Laws] Let  $\inlinegraphic{1em}{bialgebra1} := 
    \inlinegraphic{1.5em}{bialgebra2}$;  then:
    \[
    \begin{array}{ccccccccccccccc}
      \inlinegraphic{3.5em}{bialgebra3}  &=\;&
      \inlinegraphic{3em}{bialgebra4}
      &\qquad\quad&
      \inlinegraphic{2.5em}{bialgebra5}  &=\;&
      \inlinegraphic{1.5em}{bialgebra6}
  \end{array}
  \]
\item[Group Actions]
  \[
  \begin{array}{ccccccc}
    \inlinegraphic{2.5em}{group-int1} &=\;&
    \inlinegraphic{2.5em}{group-int2} 
    &\qquad&
    \inlinegraphic{2.5em}{group-int3} &=\;&
    \inlinegraphic{1.5em}{group-int4} 
    \\
    \\
    \inlinegraphic{2.5em}{group-int5} &=\;&
    \inlinegraphic{1.5em}{group-int6} 
    &\qquad&
    \inlinegraphic{2.5em}{group-int7} &=\;&
    \inlinegraphic{1.5em}{group-int8} 
  \end{array}
  \]
\item[$H$ Property and Colour Duality]
  \[
  \begin{array}{cccccccccccccccc}
    &&&&
    \inlinegraphic{2.5em}{Heq1} &=\;&
    \inlinegraphic{2.5em}{Heq2}
    \\
    \inlinegraphic{2.5em}{Heq-delta} &=\;&
    \inlinegraphic{1.5em}{reddelta}
    &\qquad\quad&
    \inlinegraphic{2em}{Heq-epsilon} &=\;&
    \inlinegraphic{1.5em}{redepsilon}
    &\qquad\quad&
    \inlinegraphic{3.3em}{Heq-alpha} &=\;&
    \inlinegraphic{1.5em}{redalpha}
  \end{array}
  \]
\end{description}  \vspace{-1em}
\caption{Graphical Equations for Quantum Systems. In addition, we have
  a ``colour duality'': each equation shown here gives rise to second,
  which is obtained by exchanging the two colours. The colour duality
  is derivable from the equations involving $H$.}
  \label{fig:graph-quant-eqns}
  \figureend
\end{figure}

The equations from Figure \ref{fig:graph-quant-eqns} which involve
only one colour allow the remarkable \emph{spider theorem}, first
noted in \cite{Coecke2006Quantum-Measure}, to be proved:

\begin{theorem}[Spider Theorem]
  Let $G$ be a connected graph generated
  from $\delta_Z$, $\epsilon_Z$, $\alpha_Z$ and their adjoints; then
  $G$ is totally determined by the number of inputs, the number of
  outputs, and the sum modulo $2\pi$ of the $\alpha$s which occur in
  it.
\end{theorem}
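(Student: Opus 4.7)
The plan is to identify a canonical ``spider'' normal form -- a single vertex with $n$ inputs, $m$ outputs, and a single accumulated phase $\alpha$ -- and show that every connected graph built from these generators reduces to this form using only the one-colour equations of Figure~\ref{fig:graph-quant-eqns}. Since such a normal form is visibly determined by the triple $(n, m, \alpha \bmod 2\pi)$, any two connected graphs with the same input count, output count, and phase sum must be provably equal.

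I would argue by induction on the number of internal (non-boundary) vertices of $G$. The base case of one vertex is handled by absorbing identities via the comonoid unit law. For the inductive step, the key technical tool is a \emph{fusion lemma}: any two vertices of the same colour joined by an edge can be merged into one vertex whose combined arity is the sum of the two individual arities minus $2$ (for the consumed edge) and whose phase is the sum of the two individual phases. To prove this lemma, I would first use comonoid associativity and commutativity to rearrange the local neighbourhoods of the two vertices into a standard configuration -- a $\delta^\dagger$ composed with a $\delta$ along the chosen edge -- then invoke the Frobenius law to perform the merge, and finally apply the abelian group law on $\alpha$-labels to collapse the two phases into one. Repeated fusion strictly decreases the vertex count while preserving the number of external inputs and outputs and the total phase; termination yields a single spider.

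Before fusion can be applied, one must deal with non-tree features. A pair of vertices joined by several parallel edges is reduced by fusing along one edge, which converts the remaining parallel edges into self-loops at the merged vertex. Self-loops at a single vertex are eliminated using the compact structure equation to view the loop as a trace, then the isometry law to remove it. Cups and caps arising from the compact closed structure are straightened into ordinary edges by the compact structure equation, so that the Frobenius form needed by the fusion lemma is always achievable.

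The main obstacle is the topological part: rigorously showing that Frobenius, associativity, commutativity, and compactness together suffice to transport any local configuration of same-colour generators into the canonical $\delta^\dagger\!\circ\!\delta$ shape where fusion applies. Connectedness of $G$ ensures an edge is always available for the inductive step, but one must be careful that cups, caps, and loops can always be rerouted without changing the triple $(n,m,\alpha)$. Once the topological normal-form reduction is in place, combining phases is a routine use of the abelian group structure, and the uniqueness of the canonical spider yields the theorem.
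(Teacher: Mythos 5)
The paper itself offers no proof of this theorem---it is quoted from the literature (Coecke's 2006 work) and the single-colour equations of Figure~\ref{fig:graph-quant-eqns} are simply asserted to suffice---so there is nothing internal to compare against; your proposal is the standard normal-form argument for a special commutative Frobenius algebra with a phase group, and it is essentially sound: Frobenius plus coassociativity gives fusion, the isometry (specialness) law kills the loops created by parallel edges, the compact-structure equation straightens cups and caps, and connectedness guarantees the induction can always proceed. Two points deserve more care than you give them: collapsing the phases requires not just the abelian group law $\alpha\circ\beta=\alpha+\beta$ but the commutation equations letting an $\alpha$ node slide past $\delta$ and $\delta^\dagger$ (the ``alpha-commute'' family in the figure), and the uniqueness of your target spider itself rests on a generalized associativity/commutativity lemma showing that all trees of $\delta$'s and $\delta^\dagger$'s with the same leaves are equal, which should be proved first rather than folded into the ``topological part.''
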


\noindent Hence any connected subgraph involving nodes of only one
colour may be collapsed to a single vertex, with a single value
$\alpha$, giving a ``spider''. Conversely, a spider may be arbitrarily
divided into sub-spiders, provided the total in- and out-degree is
preserved, along with the sum of the $\alpha$s. Informally, this can
be depicted graphically as the equation:

$$\begin{array}{ccc}
\inlinegraphic{6em}{spider_lhs} & = & \inlinegraphic{6em}{spider_rhs}
\end{array}$$
\label{sec:informal-spider}


\noindent From this one can derive $n$-fold versions of many of the
other equations.

%

Spiders offer a very intuitive way to manipulate graphs, and are far
more compact and convenient in calculations than the graphs built up
naively from the generators.  However, no finite set of equations
suffices to formalise spiders: we must move from finite graphs, where
each vertex has bounded degree, and which are subject to a finite
number of equations, to a system where nodes may have arbitrarily many
edges, and there are infinitely many concrete equations. The desire to
retain intuitive reasoning methods for these infinite families of
equations motivates the extension from concrete graphs to \emph{graph
  patterns}, the main subject developed in this paper.

\section{Graphs}
\label{sec:graphs}

\begin{definition}[Graph]
  A \emph{directed graph}\footnote{ Equivalently: a directed graph is
    a functor $G$ from $\bullet \pile{\rTo\\\rTo} \bullet$ to \catSet;
    a graph morphism is then a natural transformation $f: G
    \Rightarrow H$.  } consists of a 4-tuple $(V,E,s,t)$ where $V$ and
  $E$ are sets, respectively of \emph{vertices}\footnote{We will use
    the words ``vertex'' and ``node'' interchangeably.} and
  \emph{edges}, and $s$ and $t$ are maps which give the source and
  target vertices of a an edge respectively:
  \begin{diagram}
    E & \pile{\rTo^{\qquad s\qquad}\\\rTo_t} & V
  \end{diagram}
\end{definition}

\noindent We will assume throughout this paper that both $V$ and $E$
are finite.

\begin{remark}
  Note that any number of edges are allowed between vertices,
  including from a vertex to itself.
\end{remark}

Let $\text{in}(v) := t^{-1}(v)$ and $\text{out}(v) := s^{-1}(v)$ denote the
\emph{incoming} and \emph{outgoing} edges at a vertex $v$.  The
\emph{degree} of a vertex $v$ is $\deg(v) := \sizeof{\text{in}(v)} +
\sizeof{\text{out}(v)}$. To distinguish between elements of different
graphs, we will use the subscript notation $G = (V_G,E_G,s_G,t_G)$.

We say that a vertex $v$ is a \emph{successor} of $u$ if the there
exists an edge $e$ such that $s(e) = u$ and $t(e) = v$.  A pair of
vertices are \emph{connected}, written $u \sim v$,  if they lie in the
reflexive, symmetric, transitive closure of the successor relation.  The
equivalence classes $V/\!\!{}\sim$ are the \emph{connected components}
of $G$.  We write $\sizeof{v}$ to denote the equivalence class
containing the vertex $v$;  we write $[v]$ to denote the subgraph
determined by $\sizeof{v}$.

\begin{definition}[Graph Morphism]
  Given graphs $G$ and $H$, a \emph{graph morphism} $f : G\to H$
  consists of functions $f_E : E_G \to E_H$ and $f_V:V_G\to V_H$ such
  that:
  \begin{gather}
    s_H\circ f_E = f_V \circ s_G,\label{eq:graph-hom1}\\
    t_H\circ f_E = f_V \circ t_G\label{eq:graph-hom2}.
  \end{gather}
\end{definition}

\noindent These conditions ensure that the structure of the graph is
preserved.




\begin{definition}[Open Graph, Open Graph Morphism]\label{def:open-graph}
  An \emph{open graph} $\Gamma = (G, \partial G)$ consists of a
  directed graph $G$, and a set of vertices $\partial G \subseteq V_G$,
  such that for each $v \in \partial G$ we have $\deg(v) = 1$.  The
  set $\partial G$ is called the \emph{boundary} of $\Gamma$; 
 those vertices in $V_G \setminus \partial G$ are called the
\emph{interior} of $\Gamma$, written $\Int G$.

Given open graphs $(G,\partial G)$ and $(H, \partial H)$ a graph
morphism $f : G\to H$ defines a \emph{morphism of open graphs} 
$f: (G,\partial G) \to (H, \partial H)$ if $f_V(v) \in \partial H
\Rightarrow v \in \partial G$ for all $v$ in $V_G$.
\end{definition}

\noindent We will refer to an open graph $(G,\partial G)$
simply as $G$ when it is unambiguous to do so.

\begin{definition}[Strict Map]
  Let $f: (G,\partial G) \to (H, \partial H)$ be an open graph morphism
  say that $f$ when \emph{strict} if $\forall e \in E_H$, if
  $s_H(e) \in f_V(\Int G)$ or $t_H(e) \in f_V(\Int G)$ then $\exists e' \in
  E_G$ such that $f_E(e') = e$.
\end{definition}
\noindent Strictness ensures that there are no additional edges
connected to vertices in the image of $\Int G$.

We emphasise two points about the distinction between interior and
boundary nodes for open graphs.  We view graphs as computational
objects, built up by connecting smaller objects together; we view the
interior vertices as computational primitives.  Strict maps ensure
that the interior structure---the types and connections of the
vertices---is preserved.  The boundary of an open graph defines the
\emph{interface} of the system; the boundary nodes indicate
this interface, and do not carry computational meaning.  Hence
boundary nodes have degree one: they simply mark an edge where
something may be connected.  Morphisms of open graphs preserve this
view by not allowing interior nodes to be mapped to the boundary.

We can also view graphs as topological spaces.  In this
case the boundary nodes can be seen as points which lie outside the
space but are needed to define it, similar to the end points
of an open interval.  From this point of view, morphisms of open
graphs are \emph{continuous}.  What then are the open sets of this space?
Open subgraphs arise via two graph
operations: removing connected components from the graph, and removing
single points.  We note that it suffices to consider removing points
which lie on edges,  since vertex removal can be simulated by
disconnecting the  vertex and then removing the resultant component.
Since we are indifferent to which point on the edge is removed, we
introduce the notion of \emph{splitting an edge}.  The intuition is
that by removing a point from the middle of the edge $e$, we introduce two
new boundary points.  

\begin{definition}[Splitting an Edge]
\label{def:split}
Let $G$ be an open graph, and suppose $e\in E_G$; we define
$\split{G}{e}$, the \emph{splitting of $G$ on $e$}, via the 
graph $G' = (V_G+\{e_1,e_2\},(E_G \setminus \{e\})+ \{e_1,e_2\}, s',t')$,
where $e_1, e_2$ do not occur in $V_G$ or $E_G$, and $s'$ and $t'$ are
defined such that 
\begin{itemize}
\item $s'(e_1) = s_G(e)$, $t'(e_1) = e_1$; 
\item $s'(e_2) = e_2$, $t'(e_2) = t_G(e)$;
\end{itemize}
and they otherwise agree with $s_G$ and $t_G$ respectively.  Then
$\split{G}{e} := (G', \partial G +\{e_1, e_2\})$. 
\end{definition}

\noindent We define a canonical morphism $i$ embedding $\split{G}{e}$ back into
$G$ as follows:
\begin{itemize}
\item $i_V(e_1) = t_G(e_1)$;  $i_V(e_2) = s_G(e_2)$; and $i_V(v) = v$ otherwise.
\item $i_E(e_1) = i_E(e_2) = e$; and $i_E(e') = e'$ otherwise.
\end{itemize}
Clearly, $i$ is injective on the portion of $\split{G}{e}$ excluding
$e_1$ and $e_2$,  and it is strict.

\begin{definition}[Removing a component]
\label{def:remov-component}
  Let $\Gamma = (G,\partial G)$ be an open graph, and suppose that
  $v\in V_G$.  The graph obtained by removing the component
  $[v]$ is denoted $\Gamma - [v] :=
  (G-[v], \;\partial G \setminus  (\sizeof{v} \intersect \partial
  G)$ where the underlying graph is given by:
  \[
  G - [v] = 
  (V_G \setminus \sizeof{v}, \;\;E_G\setminus s_G^{-1}(\sizeof{v}),\;\;
  s_G|_{(E_G\setminus s_G^{-1}(\sizeof{v}))}, \;\;t_G|_{(E_G\setminus s_G^{-1}(\sizeof{v}))}\;).
  \]
\end{definition}

\noindent Writing $G+H$ for the disjoint union of open graphs, it is
immediate that  we have the isomorphism $G  \iso [v] + (G - [v])$, and
hence that the coproduct injection $\text{in}_2 : (G - [v]) \rInto{}{} {[v]}
+ (G - [v])$ provides a canonical map back into the 
original graph.  A further consequence is that every graph is equivalent to
the disjoint union of its connected components.


It is easy to show that the operations of
splitting edges and removing components generalise  to sets of edges
and vertices, and further that any sequence of such operations can be
standardised so that  all the  splittings come first.

\begin{definition}[Open Subgraph]
  Let $G$ be an open graph; then each pair $(F,U)$ with $F \subseteq
  E_G$ and $U\subseteq V_G$ defines an \emph{open subgraph}
  $\split{G}{F}-[U]$. 
\end{definition}

Every open subgraph of $G$ has a canonical map embedding it back into
$G$, constructed from the canonical embeddings at each step; it is
strict, and injective everywhere except the new edges and boundary 
nodes introduced by splittings.

\begin{definition}[Exact Embedding]
\label{def:exact-embedding}
We call an open graph morphism 
\[
f: (G,\partial G) \rInto (H, \partial H)
\]
an \emph{exact embedding}  if:
\begin{enumerate}
\item $f$ is strict;
\item $f_E$ is injective;
\item $f_V$ is injective; and, 
\item $f_V(v) \in \partial H \Leftrightarrow v \in \partial G$, for
  all $v \in V_G$..
\end{enumerate}
\end{definition}




\begin{definition}[Matching]
  We say that $G$ \emph{matches} $H$ if there exists an open subgraph $H'$ of
  $H$, and an exact embedding $e:G\rInto H'$.  In this case we write
  $G \leq H$;  we write $\denote{G}$ for the set of all graphs which
  $G$ matches.
\end{definition}

\begin{proposition}
  Let $G$, $H$, and $K$ be open graphs.  Then 
  \begin{enumerate}
  \item $G \leq$ G;
  \item $G\leq H$ and $H \leq K$; then $G \leq K$;
  \item If $G \leq H$ and $H \leq G$ then $G \iso H$.
  \item $G \leq H$ iff $\denote{H} \subseteq \denote{G}$.
  \end{enumerate}
\end{proposition}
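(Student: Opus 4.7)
The plan is to prove (1)--(4) in order, exploiting the fact that $\leq$ is defined via exact embeddings into open subgraphs, and that these subgraphs are themselves obtained by the standardised operations of splitting edges followed by removing components (as noted just before Definition 3.9).

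For (1), I would take the trivial open subgraph $G' = G$ obtained by splitting no edges and removing no components, and use the identity morphism as the exact embedding; the four clauses of Definition~\ref{def:exact-embedding} are immediate. For (2), given exact embeddings $e_1 : G \rInto H'$ into an open subgraph $H' = \split{H}{F} - [U]$ of $H$, and $e_2 : H \rInto K'$ into $K' = \split{K}{F'} - [U']$, the idea is to transport the data defining $H'$ through $e_2$ to obtain an open subgraph $K''$ of $K$ which contains $e_2 \circ i \circ e_1(G)$ as an exact embedding, where $i$ is the canonical map $H' \rInto H$. Concretely, split $K'$ at the images under $e_2$ of the split edges of $H'$ (extending by the fresh edges introduced in $H'$) and remove the images of the components of $[U]$; the remark after Definition~\ref{def:exact-embedding} that open subgraphs of open subgraphs are open subgraphs makes this legitimate. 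The composed morphism is injective on vertices and edges (composition of injections), strict (a short check using strictness of $e_1$, $e_2$ and $i$), and preserves/reflects boundary membership, so it is an exact embedding $G \rInto K$.

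For (3), suppose $G \leq H$ and $H \leq G$. Composing as in (2) yields exact embeddings $G \rInto G$ and $H \rInto H$. Since $V_G$, $E_G$, $V_H$, $E_H$ are all finite and the components $f_V$, $f_E$ of an exact embedding are injective, both self-embeddings are bijections on vertices and on edges. A bijective exact embedding is an open-graph isomorphism in both directions, and unwinding this through the individual embeddings shows that no splittings or removals were possible at either stage, giving $G \iso H$. (The main subtle point here is verifying that a strict, bijective morphism whose image contains no ``extra'' boundary nodes is actually invertible as an open graph morphism; this uses clause~4 of Definition~\ref{def:exact-embedding}.)

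Finally, (4) reduces to (1) and (2). For the forward direction, if $G \leq H$ and $K \in \denote{H}$, i.e.\ $H \leq K$, transitivity gives $G \leq K$, so $K \in \denote{G}$; hence $\denote{H} \subseteq \denote{G}$. For the converse, reflexivity gives $H \in \denote{H}$, so the assumption $\denote{H} \subseteq \denote{G}$ forces $H \in \denote{G}$, which is exactly $G \leq H$. The main obstacle throughout is the transitivity step: one must verify carefully that the splitting and removal operations used to define one open subgraph can be pulled back across an exact embedding into another graph to produce a genuine open subgraph there, and that strictness is preserved under the resulting composition. Once this bookkeeping is in hand, the remaining arguments are essentially formal.
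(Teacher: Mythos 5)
Your proposal is correct and follows essentially the same route as the paper's (very terse) proof: reflexivity via the identity embedding, transitivity via closure of exact embeddings under composition (with the subgraph-transport bookkeeping the paper leaves implicit), antisymmetry from mutual embeddings plus finiteness, and part 4 reduced to parts 1 and 2. The only point worth double-checking in your part 3 is that an open subgraph $\split{G}{F}-[U]$ can have \emph{more} vertices and edges than $G$ (splitting adds them), so the injectivity-plus-finiteness step should be run on interior vertices and then extended, rather than asserted directly for all of $V_G$ and $E_G$ --- but this is a refinement of detail, not a change of argument, and the paper's own proof is no more explicit.
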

\begin{proof}
  The first property follows from the fact that the identity map is
  an exact embedding;  the second and fourth hold because exact embeddings
  are closed under composition.  For the third property: since we can
  exactly embed $G$ into a subgraph of $H$, and vice versa, we must have
  that these subgraphs are isomorphic to the original graphs; from
  here the isomorphism between G and H is easily
  constructed.
\end{proof}


\section{Graphs with Exterior Nodes}

We now present a generalisation of the open graphs described in the
previous section.  The purpose of this generalisation is to offer more
precise control over matching:  a graph $G$ will match $H$ when it can be
exactly embedded \emph{in a given configuration}.

\begin{definition}
  An \emph{extended open graph}, henceforth abbreviated
  \emph{e-graph},  is pair $(G,X)$ where $G$ is a graph and $X
  \subseteq V_G$ is a distinguished set of vertices.  The elements of
  $X$ are called the \emph{exterior nodes} of $G$;  those vertices in
  $V_G \setminus X$ are called the \emph{interior}.

  An e-graph morphism $f:(G,X)\to (H,Y)$ is a graph morphism such that
  $f_V(v) \in Y$ implies $v \in X$ for all $v\in V_G$.
\end{definition}

The exterior nodes of an e-graph generalise the boundary nodes of an
open graph and are viewed in the same way:  as points outside the
graph.  As well as marking the edge of the graph, exterior points also
constrain how the edges incident at them may be embedded into a larger
graph: the must meet at the same point.  This will be made explicit below.

\begin{definition}[Splitting a Vertex]
  Let $(G,X)$ be an e-graph with $x\in V_G$; we define a new e-graph
  $\split{G}{x}$ by \emph{splitting the  vertex $x$} as $\split{G}{x}
  := (G', (X\setminus \{x\}) + \mathrm{in}(x) + \mathrm{out}(x))$ where 
  $G' := ((X\setminus \{x\}) + \mathrm{in}(x) + \mathrm{out}(x)), E_G, s',
  t')$ and
  \begin{gather*}
    s'(e) = e \quad \text{if } e\in \mathrm{out}(x), \\
    t'(e) = e \quad \text{if } e\in \mathrm{in}(x), \\
    s'(e) = s_G(e), \quad t'(e) = t_G(e) \text{ otherwise.}
  \end{gather*}
\end{definition}
We can define a canonical map $i:\split{G}{x} \to G$ by $i_E = \id{}$, and
$i_V(v) = x$ if $v\in \mathrm{in}(x)+\mathrm{out}(x)$ and $i_V(v) = v$
otherwise.  Evidently, the splitting operation can be lifted to sets
of vertices, so we may write $\split{G}{U}$ when $U\subseteq V_G$.
We define a relation $\heartsuit$ over the vertices of $\split{G}{U}$ by
$v_1 \heartsuit v_2$ iff $i_V(v_1) = i_V(v_2)$.  


\begin{definition}[Relaxtion of an Extended Graph]
  Let $(G,X)$ be an e-graph; define its \emph{relaxation},  $\Relax{G}
  := \split{G}{X}$.
\end{definition}

Essentially $\Relax{G}$ is the open graph that matches $G$ the
closest.  Note that if $G$ is an open graph itself---i.e. all its
exterior points are of degree one---then $\Relax{G} \iso G$.  

\begin{definition}[Matching an Extended Graph]
  We say that $(G,X)$ matches $(H,Y)$ when there exists $H'$, an open
  subgraph of $\Relax{H}$ and an exact
  embedding $f:\Relax{G} \to H'$ such that if $v\heartsuit u$ in
  $\Relax{G}$ then $f(v) \heartsuit f(u)$ in $\Relax{H}$.  In this case we
  write $G \leq_e H$.  As before we define $\denote{G}_e := \{ H | G
  \leq_e H\}$
\end{definition}

\begin{proposition}
\label{match-interp-thm}
$G \leq_e H \Leftrightarrow \denote{G}_e \supseteq \denote{H}_e$.
\end{proposition}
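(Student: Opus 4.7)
The plan is to establish reflexivity and transitivity of $\leq_e$, and then derive the biconditional directly, mirroring the structure of the earlier proposition for $\leq$ on open graphs.

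First, I would show that $\leq_e$ is reflexive. For any e-graph $G$, the identity map $\id : \Relax{G} \to \Relax{G}$ is an exact embedding into the open subgraph $\Relax{G}$ of itself, and it trivially preserves $\heartsuit$. Hence $G \leq_e G$, and so $H \in \denote{H}_e$ for every e-graph $H$.

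Second, I would establish transitivity: if $G \leq_e H$ and $H \leq_e K$, then $G \leq_e K$. Unfolding the definitions, there is an open subgraph $H' \subseteq \Relax{H}$ and an exact embedding $f: \Relax{G} \to H'$ preserving $\heartsuit$, and likewise an open subgraph $K' \subseteq \Relax{K}$ and an exact embedding $g: \Relax{H} \to K'$ preserving $\heartsuit$. Composing via the inclusion $H' \hookrightarrow \Relax{H}$ yields a map $g \circ f : \Relax{G} \to K'$. Since exact embeddings are closed under composition (as used in the proof of the earlier proposition), and since $K'$ is an open subgraph of $\Relax{K}$, this witnesses the exact-embedding half of $G \leq_e K$. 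For the $\heartsuit$-preservation clause: whenever $v \heartsuit u$ in $\Relax{G}$, the hypothesis on $f$ gives $f(v) \heartsuit f(u)$ in $\Relax{H}$, and then the hypothesis on $g$ gives $g(f(v)) \heartsuit g(f(u))$ in $\Relax{K}$, as required.

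With reflexivity and transitivity in hand, the biconditional is immediate. For the forward direction, suppose $G \leq_e H$ and let $K \in \denote{H}_e$; then $H \leq_e K$, and transitivity gives $G \leq_e K$, so $K \in \denote{G}_e$. For the backward direction, suppose $\denote{H}_e \subseteq \denote{G}_e$; by reflexivity $H \in \denote{H}_e$, hence $H \in \denote{G}_e$, i.e., $G \leq_e H$.

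The only subtle step is the composition in transitivity: one must be careful that $g$ genuinely acts on the image of $f$ (which sits inside $\Relax{H}$ via $H' \hookrightarrow \Relax{H}$) and that $\heartsuit$ on $H'$ is inherited from $\heartsuit$ on $\Relax{H}$, so that $g$'s preservation property transfers. Once this is observed, both steps reduce to the corresponding facts for ordinary exact embeddings between open graphs, which have already been established.
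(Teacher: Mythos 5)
Your overall strategy is the right one, and it matches how the paper treats the analogous statement for plain open graphs (the paper in fact gives no separate proof for the e-graph version): reduce the biconditional to reflexivity and transitivity of $\leq_e$, getting the backward direction from $H \in \denote{H}_e$ and the forward direction by chaining matchings. Reflexivity and the order-theoretic bookkeeping are fine, as is the observation that the $\heartsuit$-preservation clauses compose.

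The genuine gap is in your transitivity step, and you half-notice it without repairing it. You form $g \circ f : \Relax{G} \to K'$ "via the inclusion $H' \hookrightarrow \Relax{H}$", but that canonical map is \emph{not} an exact embedding: an open subgraph $H' = \split{(\Relax{H})}{F} - [U]$ contains fresh boundary vertices and duplicated edge-halves created by splitting, and the canonical map sends both halves of a split edge to the single original edge and sends the new boundary vertices to (possibly interior) vertices of $\Relax{H}$. So it fails injectivity on edges and fails the condition $f_V(v)\in\partial H \Leftrightarrow v\in\partial G$. Consequently the composite $\Relax{G} \to H' \to \Relax{H} \to K'$ is in general not an exact embedding into $K'$: a boundary vertex of $\Relax{G}$ that $f$ sends to a freshly created boundary vertex of $H'$ lands in the interior of $K'$. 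The correct repair is not to compose through the inclusion but to transport the subgraph-forming data along $g$: since $g:\Relax{H}\to K'$ is an exact embedding, split the edges $g_E(F)$ and remove the components meeting $g_V(U)$ inside $K'$, obtaining an open subgraph $K''$ of $K'$ (hence of $\Relax{K}$, using that a subgraph of a subgraph is a subgraph), and check that $g$ lifts to an exact embedding $\tilde{g}: H' \to K''$. Then $\tilde{g}\circ f:\Relax{G}\to K''$ witnesses $G \leq_e K$, and the $\heartsuit$ condition passes through $\tilde{g}$ because $\heartsuit$ on $K''$ is defined via the canonical map back to $\Relax{K}$. With that substitution your argument goes through; without it, the claimed composite simply does not satisfy the definition of matching.
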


\begin{figure}[t]
  \begin{tabular}{ccccc}
  \scalebox{0.5}{\includegraphics{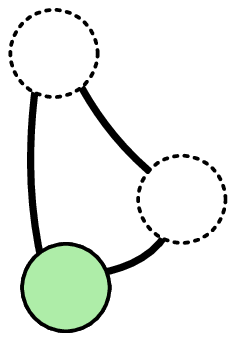}} & 
  \scalebox{0.5}{\includegraphics{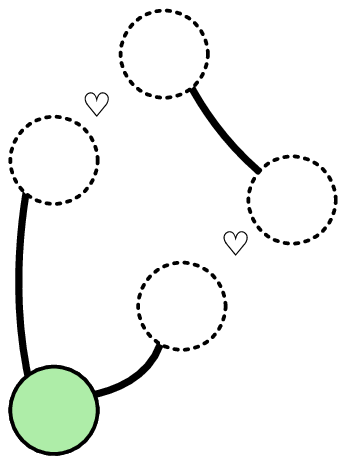}} & 
  \scalebox{0.5}{\includegraphics{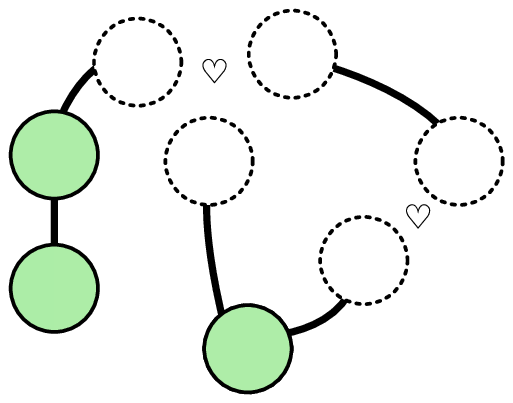}} & 
  \scalebox{0.5}{\includegraphics{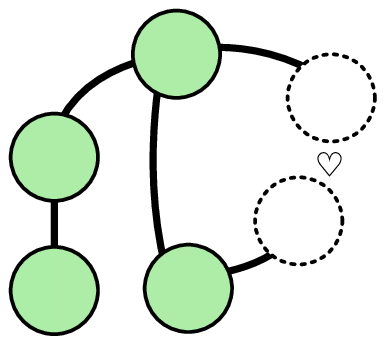}} & 
  \scalebox{0.5}{\includegraphics{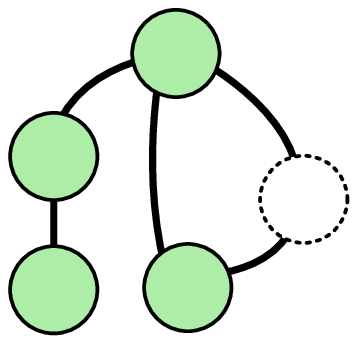}} \\
  $G$ & $relax(G)$ & $H'$ a subgraph of $relax(H)$ & $relax(H)$ & $H$
\end{tabular}
\caption{An illustrative example showing the
  steps involved in the e-graph matching $G \leq_e H$.}
\label{fig:egraphmatching}
 \end{figure}

\subsection{Composing Graphs}

We now introduce a general method for composing graphs which we call
\emph{plugging}; it works equally well for graphs, open graphs, and
e-graphs.  We will give here the definition for the case of e-graphs,
but the reader will have no difficulty in modifying the definitions
accordingly. 

Let $G$ be a graph (not open or extended), and suppose that  we have a
partition of its vertices $V_G = F + B$ into a \emph{front set} and a
\emph{back set}; in this case call the  triple $(G,F,B)$ a
\emph{two-sided graph}.  

\begin{definition}
  A two sided e-graph, $(\pi, V_\pi, F, B)$, with a pair of embeddings
  $p_1$ and $p_2$ is said the be the plugging of two e-graphs $(G,X)$
  and $(H,Y)$, when $\pi \leq_e G$ and $\pi \leq_e H$ by $p_1$ and
  $p_2$ respectively, such that $p_1(F) \subseteq X$ and $p_2(B)
  \subseteq Y$.
  Then we define the \emph{plugging}, $\pi_{p}(G,H)$, as via the
  pushout:
  \begin{diagram}
    \pi & \rInto^{p_1} & G & \\
\dInto<{p_2} & & \dTo & \\
H & \rTo{}  & \NWpbk \pi^{p_1}_{p_2}& \hspace{-0.5cm} (G,H)
  \end{diagram}
  We let $\pi(G,H)$ abbreviate $\exists
  p_1,p_2.\,\pi^{p_1}_{p_2}(G,H)$. An example illustrating plugging is
  given in Figure~\ref{fig:plugging}.
\end{definition}

\begin{figure}[t]
  \centering{\scalebox{0.5}{\includegraphics{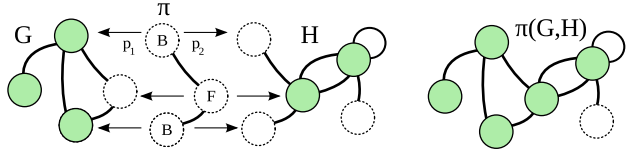}}}
\caption{The plugging of G and H via the 
    two-sided e-graph $\pi$ with embeddings $p_1$ and $p_2$.}
\label{fig:plugging}
\end{figure}

\begin{proposition}
  Let $\pi$, $G$, and $H$ be as above, and let $K$ be some e-graph; then 
  \begin{itemize}
  \item $\pi(G,H) \iso \pi(H,G)$;
  \item $G \leq_e \pi(G,H)$ and $H \leq_e \pi(G,H)$;
  \item $K \leq_e G$ implies $K \leq_e  \pi(G,H)$;
  \end{itemize}
\end{proposition}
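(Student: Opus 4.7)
The plan is to dispatch the three claims in order, with the middle one doing all the real work.

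For the first claim, $\pi(G,H) \cong \pi(H,G)$ is immediate from the universal property of pushouts applied to the cospan $G \leftarrow \pi \to H$: swapping the roles of $p_1$ and $p_2$ produces a pushout of the same cospan with its legs interchanged, and any two pushouts of a single cospan are canonically isomorphic. The only item to note is that the front/back partition of $\pi$ is itself symmetric, in the sense that a two-sided e-graph $(\pi,F,B)$ with maps to $(G,H)$ can equally be viewed as $(\pi,B,F)$ with maps to $(H,G)$, so the constraints $p_1(F)\subseteq X$ and $p_2(B)\subseteq Y$ remain satisfied after the swap.

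For the second claim, I would work entirely at the level of underlying vertex and edge sets, exploiting the fact that pushouts of e-graphs are computed componentwise in \catSet. The pushout cocone supplies a morphism $q_G : G \to \pi(G,H)$, and I would argue that it is an exact embedding (in the sense of Definition~\ref{def:exact-embedding} applied to relaxations). First, $p_1$ is injective on vertices and edges, since it is an exact embedding of $\Relax{\pi}$ into an open subgraph of $\Relax{G}$; by the usual fact that pushouts along injections in \catSet are injective on the opposite leg, $q_G$ is injective on vertices and edges as well. Second, the constraint $p_1(F) \subseteq X$ forces every vertex identification produced by the pushout to take place within the exterior of $G$, which gives simultaneously the strictness of $q_G$ (no new edges appear at any vertex in the image of $\Int G$) and the interior/exterior preservation required by an exact embedding. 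I then lift $q_G$ through relaxation: splitting the exterior commutes, up to isomorphism, with the pushout on the interior data, so $q_G$ induces an exact embedding of $\Relax{G}$ into an open subgraph of $\Relax{\pi(G,H)}$. The $\heartsuit$ relation is preserved because $q_G$ by construction respects the equivalence generated by identification of split copies of exterior vertices. The claim for $H$ follows by symmetry using item (i).

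The third claim is then routine: exact embeddings respecting $\heartsuit$ are closed under composition (the same observation made after Definition~\ref{def:exact-embedding} for open graphs carries over to the extended setting via the functoriality of $\mathrm{relax}(-)$), so $K \leq_e G$ combined with the $G \leq_e \pi(G,H)$ produced in (ii) yields $K \leq_e \pi(G,H)$.

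The main obstacle I anticipate is the bookkeeping in item (ii), where one must check that splitting the exterior of $\pi(G,H)$ does not introduce accidental identifications and that the image is a genuinely open subgraph of $\Relax{\pi(G,H)}$. The underlying reason everything goes through is that the plugging constraints $p_1(F) \subseteq X$ and $p_2(B) \subseteq Y$ confine every identification made by the pushout to the exterior, so the interior of $G$ (and of $H$) is transported untouched into $\pi(G,H)$.
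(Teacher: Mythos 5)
The paper states this proposition without proof, so there is nothing to compare your argument against directly; judging it on its own terms, items (i) and (iii) are fine. Item (i) is the symmetry of the pushout plus the observation you make about reading $(\pi,F,B)$ backwards, and item (iii) is just transitivity of $\leq_e$, i.e.\ closure of exact embeddings (respecting $\heartsuit$) under composition, exactly as in the analogous proposition for open graphs.

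The problem is the central justification of item (ii). You claim that the condition $p_1(F)\subseteq X$ ``forces every vertex identification produced by the pushout to take place within the exterior of $G$''. That is not what the definition gives you: $V_\pi = F+B$ is a partition, and only the $F$-vertices are required to land in $X$; a vertex $v\in B$ satisfies $p_2(v)\in Y$, but $p_1(v)$ may perfectly well lie in $\Int G$. (This is not a corner case: in sequential composition, plugging along an identity graph, the back set of $\pi$ maps precisely onto interior vertices of $G$.) So the pushout does identify interior vertices of $G$ with vertices of $H$, and edges of $H$ incident at $p_2(v)$ arrive at a point of $q_G(\Int G)$; your strictness argument therefore does not go through as stated. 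What actually saves the claim is different: an edge of $H$ arriving at such a point either lies in the image of $p_2$, in which case the pushout identifies it with an edge already present in $G$ (so nothing genuinely new is attached to the interior), or it does not, in which case it must be split away when constructing the open subgraph $H'$ of $\Relax{\pi(G,H)}$ --- and one then has to check that the resulting stubs and components can be removed without damaging the image of $\Relax{G}$. That is precisely the step you flag as ``the main obstacle'' and leave unresolved, but it is the entire content of the proof. Relatedly, $q_G$ itself is typically not an exact embedding of e-graphs (an exterior vertex of $G$ in $p_1(F)$ is identified with a possibly interior vertex of $H$ and so need not remain exterior in the pushout); the matching has to be exhibited, as the definition of $\leq_e$ demands, by an explicit choice of edges to split and components to discard in $\Relax{\pi(G,H)}$, followed by verification of the four conditions of Definition~\ref{def:exact-embedding} for the induced map out of $\Relax{G}$.
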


\section{Compact Closed Categories}
\label{sec:comp-clos-categ}

\begin{definition}
\label{compactcat-def}
A strict symmetric monoidal category~\cite{AspLon:CatTypStruct:1991}
is called compact closed~\cite{KelLap:comcl:1980} when each object $A$
has a chosen dual object $A^*$, and morphisms
\begin{gather*}
  d_A : I \to A^* \otimes A \quad\quad\quad e_A : A \otimes A^* \to I
\end{gather*}
where $I$ is the tensor identity of the compact closed category, such
that
\begin{align}
  A \iso A \otimes I \rTo^{\id{A} \otimes d_A} A \otimes A^* \otimes A
  \rTo^{e_A \otimes \id{A}} I \otimes A \iso A & = \id{A} \label{eq:comcl1}\\
  A^* \iso I \otimes A^* \rTo^{ d_A \otimes \id{A^*}} A^* \otimes A
  \otimes A^* \rTo^{\id{A^*} \otimes e_A} A^* \otimes I \iso A^* & =
  \id{A^*} \label{eq:comcl2}
\end{align}
\end{definition}

Every arrow $f:A\to B$ in a compact closed category \catC
has a \emph{name} and \emph{coname}:
\[
\name{f} : I \to A^* \otimes B, \qquad \coname{f} : A \otimes  B^* \to I,
\]
which are constructed as $\name{f} = (\id{A^*}\otimes f) \circ d_A$ and
$\coname{f} = e_B \circ (f \otimes \id{B^*})$.  Hence there are natural
isomorphisms $\catC(A,B) \iso \catC(I,A^*\otimes B) \iso
\catC(A\otimes B^*,I)$ making \catC monoidally closed\footnote{In
  general compact closed categories  are models of multiplicative
  linear logic where $A \multimap B$ is defined as $A^\bot \otimes B$.}.
Furthermore,  $f$ has a dual, $f^* : B^* \to A^*$, defined by 
\[
f^* = (\id{A^*} \otimes e_B) \circ (\id{A^*}\otimes f \otimes
\id{B^*}) \circ (d_A \otimes \id{B^*})
\]
By virtue of equations \eqref{eq:comcl1} and \eqref{eq:comcl2}, $f^{**} =
f$.  Thus $(\cdot)^*$ lifts to an involutive functor
$\catC^{\text{op}} \to \catC$,  making $\catC$ equivalent to its
opposite.

\subsection{Graph Representations for Compact Closed Categories}
\label{sec:graph-repr-comp}

Open graphs with certain additional structure give a representation for
compact closed categories; we now give an overview of this
construction.  The details omitted here can be found in
\cite{Duncan:thesis:2006}.  Pictorial representations are in
Fig.~\ref{fig:comcl-graphs}. We make the convention that the domain of an
arrow is at the top of the picture, and its codomain is at the bottom.

\begin{figure}[t]
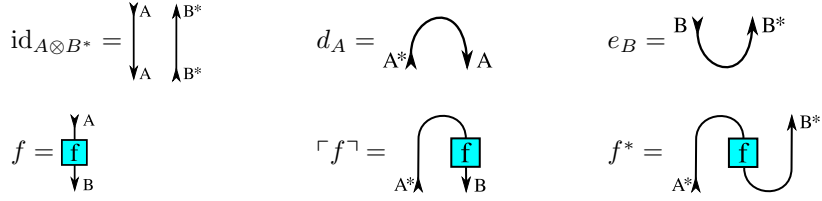

  \centering
  \[
  \begin{array}{lllll}
      \id{A \otimes B^*} = \inlinegraphic{3em}{comcl-id}
      &\qquad&
      d_A = \inlinegraphic{2em}{comcl-eta}
      &\qquad& 
      e_B = \inlinegraphic{2em}{comcl-epsilon} 
      \\\\
      f = \inlinegraphic{3em}{comcl-f} 
      & \qquad &
      \name{f} = \inlinegraphic{3em}{comcl-name-f}
      &\qquad &
      f^* =  \inlinegraphic{3em}{comcl-dual-f}
  \end{array}
  \]
  \caption{Compact Closed Structure as Graphs. }
  \label{fig:comcl-graphs}
\end{figure}

A \emph{concrete graph} $\Gamma$ is 5-tuple $(G, \dom\Gamma, \cod\Gamma,
<_{\text{in}(\cdot)}, <_{\text{out}(\cdot)})$ where:
\begin{itemize}
\item $G = (V,E,s,t)$ is a graph;
\item $\dom\Gamma$ and $\cod\Gamma$ are totally ordered disjoint sets of
  degree one vertices of $G$.  Therefore the union of these sets is the
  \emph{boundary} of the open graph $(G, \dom\Gamma, \cod\Gamma)$;
\item $<_{\text{in}(\cdot)}$ is a family of maps, indexed by $V$ such
  that $<_{\text{in}(v)} : \text{in}(v) \rTo^\isomorphism
  \mathbb{N}_k$ where $k = \sizeof{\text{in}(v)}$.
\item $<_{\text{out}(\cdot)}$ is a family of maps, indexed by $V$ such
  that $<_{\text{out}(v)} : \text{out}(v) \rTo^\isomorphism
  \mathbb{N}_{k'}$ where $k' = \sizeof{\text{out}(v)}$.
\end{itemize}

Since the sets $\dom\Gamma$ and $\cod\Gamma$ consist of vertices of degree
one, we can assign a polarity to each one:  $v \mapsto +$ if the edge
incident at $v$ is an incoming edge; $v \mapsto -$ otherwise.  Hence
$\cod \Gamma$ and $\dom \Gamma$ are \emph{ordered signed sets}.  Given any
ordered signed set $S$ we write $S^*$ for the same ordered set with
the opposite signing.   Given two such sets we can define their disjoint
union $R+S$ as the disjoint union of the underlying sets, inheriting
the signing and the order from $R$ and $S$, with the convention that
$r < s$ for all $r\in R, s\in S$.

\begin{proposition}
Concrete graphs form a compact closed category whose objects are
ordered signed sets and whose arrows $f:A\to B$ are concrete graphs with
$\cod f = B$  and $\dom f = A^*$.
\end{proposition}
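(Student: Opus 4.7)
My plan is to exhibit the required structure explicitly and then check the axioms, working with concrete graphs up to isomorphism. The objects are ordered signed sets, and the key data to specify are: composition, identities, the tensor bifunctor, the symmetry, and the unit/counit maps $d_A$ and $e_A$ of the compact structure.

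First I would construct composition. Given $f : A \to B$ and $g : B \to C$, the sets $\cod f$ and $\dom g$ have opposite polarity (because $\dom g = B^*$), and both carry the same underlying order as $B$. I would form $g \circ f$ as the pushout which identifies, for each $i$, the $i$-th vertex of $\cod f$ with the $i$-th vertex of $\dom g$. This is a special case of the plugging construction of the previous section. The glued vertices now have degree two, with one incoming and one outgoing edge; I would then contract each such vertex, replacing the two incident edges by a single edge (the standard ``smoothing''), and take $\dom(g\circ f) := \dom f$, $\cod(g \circ f) := \cod g$, with the induced ordering and signing. Identities $\id_A$ are the graphs with no interior vertices, a vertex pair for each element of $A$, and an edge from the domain copy to the codomain copy, oriented so that $\dom(\id_A) = A^*$ and $\cod(\id_A) = A$. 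Associativity and the unit laws follow, up to isomorphism, because the pushouts involved are associative and gluing to the identity just inserts extra degree-two vertices which the smoothing step removes.

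Next I would define the tensor on objects $A \otimes B$ as the disjoint union, ordered with $A$ before $B$ and inheriting the signings. On morphisms $f \otimes g$ is the disjoint union of the underlying concrete graphs, with domain and codomain concatenated in the obvious order. Functoriality of $\otimes$, the interchange law, and strict associativity and unit laws for $\otimes$ are immediate because disjoint union is strictly associative and has the empty graph as identity. The symmetry $\sigma_{A,B} : A \otimes B \to B \otimes A$ is the ``crossing'' graph with no interior vertices, built from the identity edges permuted appropriately; naturality and the symmetry axioms follow by inspection of the resulting pictures.

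For the compact closed structure, take $A^*$ to be the ordered signed set with reversed signing (as already introduced). Define $d_A : I \to A^* \otimes A$ to be the ``cap'' graph: no interior vertices, no domain, and for each element $a$ of $A$ an edge joining the corresponding boundary vertex in $A^*$ to the one in $A$. Dually, $e_A : A \otimes A^* \to I$ is the ``cup''. The only nontrivial thing left is to verify the zigzag equations \eqref{eq:comcl1} and \eqref{eq:comcl2}. Here the hard step, and the real geometric content of the whole statement, is that after composing $(\id_A \otimes d_A)$ with $(e_A \otimes \id_A)$ and applying the smoothing step of composition, the resulting graph is isomorphic to $\id_A$. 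This follows because the composite is a disjoint union of strands, each of which is a sequence of edges joined at degree-two vertices running from one boundary vertex of the domain to the corresponding one of the codomain; smoothing collapses each strand to a single edge, giving exactly $\id_A$ on the nose. The main obstacle is therefore setting up composition carefully enough that smoothing really produces an equality of concrete graphs (or at worst a canonical isomorphism) and the yanking computation goes through without ambiguity; once that is in place, equation \eqref{eq:comcl2} is obtained symmetrically, and the proposition follows.
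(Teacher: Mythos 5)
Your construction matches the paper's own: identities and the cap/cup $d_A$, $e_A$ share the same underlying graph with reassigned domain and codomain, composition is gluing along matched boundary vertices followed by eliminating the resulting degree-two vertices (the paper phrases this as erasing the boundary vertices and identifying the incident edges, which is your pushout-plus-smoothing), and the tensor is disjoint union. You supply somewhat more detail on the yanking equations than the paper, which defers the verification to \cite{Duncan:thesis:2006}, but the approach is essentially identical.
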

For each ordered signed set $A$, the identity map $\id{A}$ has $\dom
\id{A} = A^*$ and $\cod \id{A} = A$; its underlying graph has $E = A$
and $V = A^* + A$ with $t(a) = a$ and $s(a) = a^*$.  Given a pair of
concrete graphs $f:A\to B$ and $g:B\to C$ their composition $g\circ
f:A\to C$ is constructed by merging the two graphs, erasing the
vertices of $\cod f$ and $\dom g$ (called the \emph{boundary
  vertices}), and identifying the edges previously incident at the
deleted vertices.  (Due to the opposite polarity of the domain and
codomain the edges have compatible direction.)  The tensor product on
objects $A,B$ is simply $A+B$; given $f:A\to B$, $g: C\to D$, the
graph of $f \otimes g$ is the disjoint union of the graphs of $f$ and
$g$.  The unit for the tensor is the empty set.  The morphisms $d_A :
I \to A^* \otimes A$, $e_A : A \otimes A^* \to I$ have the same
underlying graph as $\id{A}$, but $\dom d = \emptyset$, $\cod d = A^*+A$, $\dom e
= A+A^*$ and $\cod e = \emptyset$.

\begin{remark}
Although we have not written it explicitly, both composition
and tensor can both be expressed as plugging.  The tensor is the
plugging along the empty graph, while composition is plugging along
an identity graph.  In fact, one can define another compact closed
category  whose objects are two-sided graphs and whose arrows are
e-graphs; sadly, space does not allow it to be described here.
\end{remark}

This category captures exactly the axioms for compact closed
structure, in the sense that any freely generated compact closed
category can be represented by concrete graphs.  We will consider
a collection of basic terms\footnote{See \cite{Duncan:thesis:2006} for
  a more thorough description of the nature of the terms.} $F$
whose types are vectors of some set of basic types $T$.  Then:

\begin{definition}
  A \emph{$T,F$-labelling} $\theta$ for a concrete graph $\Gamma$ is a pair of
  maps  $\theta_T : E \to T$ and $\theta_F : (V - \cod\Gamma -
  \dom\Gamma) \to F$  such that for each vertex  $v$, if
  $\text{in}(v) = \langle a_1, \ldots, a_n\rangle$ and $\text{out}(v)
  = \langle b_1, \ldots, b_m\rangle$ then 
  \[
  \theta v : \langle \theta a_1, \ldots, \theta a_n \rangle
  \to 
  \langle \theta b_1, \ldots, \theta b_m \rangle
  \]
  We say a concrete graph $\Gamma$ is \emph{$T,F$-labellable} if there exists an 
  $T,F$-labelling for it; and if $\theta$ is a labelling for $\Gamma$, then
 the pair $(\Gamma,\theta)$ is called a \emph{$T,F$-labelled graph}.
\end{definition}

The $T,F$-labelled graphs form a compact closed category in the same
way as the concrete graphs, subject to the further restriction
that arrows are composable only when their labellings agree.  

\begin{theorem}
  Let \catC be a compact closed category, freely generated by some set
  of arrows $F$ and ground types $T$;  then \catC is equivalent to the
  category  of $T,F$-labelled graphs.
\end{theorem}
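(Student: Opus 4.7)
The plan is to construct an interpretation functor $\Phi$ from the free compact closed category on $(T,F)$ into the category of $T,F$-labelled concrete graphs and show it is fully faithful and essentially surjective. On objects, $\Phi$ sends a word in $T \cup T^*$ to the corresponding ordered signed set. On generators, each basic arrow $f \in F$ of type $\langle a_1,\dots,a_n\rangle \to \langle b_1,\dots,b_m\rangle$ maps to the single-interior-vertex labelled graph whose domain and codomain are given by the typed ordered boundaries; identities, associators, and symmetries map to the corresponding edge-only graphs; $d_A$ and $e_A$ map to the cup and cap graphs shown in Figure~\ref{fig:comcl-graphs}. Composition and tensor in the free category are sent to plugging along an identity graph and plugging along the empty graph respectively, as noted in the remark preceding the theorem.

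The first task is to show $\Phi$ is well-defined, i.e.\ that the defining equations of compact closed structure (associativity, unit, bifunctoriality of $\otimes$, naturality of symmetry, and the two triangle identities \eqref{eq:comcl1} and \eqref{eq:comcl2}) all hold in the graph category. Each of these becomes a graph isomorphism: associativity and unit laws reduce to isomorphism of pushouts constructed in the plugging definition, while the snake equations reduce the cup-cap composite to an identity graph by deleting two boundary vertices and joining two edges, which is exactly a graph isomorphism by Definition~\ref{def:open-graph}. This step is mostly routine since the graph category has already been shown to be compact closed.

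For essential surjectivity I would give a decomposition procedure: pick a total order on the interior vertices of a given labelled concrete graph $\Gamma$ compatible with a ``horizontal slicing'' of $\Gamma$ between its domain and codomain, insert cups/caps as needed to route edges that bypass or revisit the slice, and read off a term of the free category slice-by-slice as a tensor of a generator (or identity) with identities. The composite of these slice-terms under $\Phi$ reproduces $\Gamma$ up to isomorphism of concrete graphs, because plugging along identity lines implements exactly the merging-and-identifying of boundary edges described in the construction of concrete-graph composition.

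The main obstacle is faithfulness: two terms with $\Phi$-isomorphic graphs must be provably equal in the free compact closed category. This is the coherence theorem for compact closed categories (Kelly--Laplaza), and the natural strategy is to show that any two slice-decompositions of the same concrete graph are connected by a finite sequence of local moves---exchanging the order of causally-independent interior vertices, sliding generators past cups and caps, and snake-straightening---each of which is precisely an instance of a compact closed axiom. Concretely, one reduces both terms to a canonical form determined by the vertex order and the chosen slicing, and then shows any two canonical forms coming from $\Phi$-isomorphic graphs differ only by the data used to build them (which is absorbed by the axioms). Fullness is then immediate: since a morphism in the graph category is itself a labelled graph, essential surjectivity exhibits a term mapping to it. The paper defers these details to \cite{Duncan:thesis:2006}, but the coherence step is where the real work lies.
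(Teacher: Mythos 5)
Your outline matches the route the paper implicitly takes (and defers to \cite{Duncan:thesis:2006}): interpret each generator as a single-interior-vertex labelled graph, check the compact closed equations hold up to graph isomorphism, obtain essential surjectivity by slicing a concrete graph into a composite of tensored generators, and reduce faithfulness to Kelly--Laplaza-style coherence by connecting any two slice decompositions through interchange and snake moves. You rightly flag coherence as the substantive step; the one caveat worth recording against both your sketch and the paper's formalism is the treatment of closed loops such as $e_{A^*} \circ d_A : I \to I$, which after composition leave an edge with no source or target vertex and hence fall outside the stated definition of a concrete graph, so the scalars of the free category need separate handling.
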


Given a compact closed  category  \catC generated by some basic set of
operations,  the arrows of \catC have a canonical representation as
labelled graphs.  A consequence of the theorem is then that two arrows
are equal by the equations of the compact closed  structure if and
only if their graph representations are equal.

As a final remark before moving on, note that the external structure
of a vertex in a concrete graph is essentially the same as that of a
complete graph; hence one can consistently view subgraphs as vertices,
and abstract over the their internal structure.

\section{!-Boxes}

To support reasoning with spiders we introduce the operation
\emph{!-boxing} (pronounced bang-boxing), on graph representations.
Given a graph representation, this introduces a new notation, that of
outlining a set of nodes (!-boxing them). We then introduce matching
which formalises the idea that a !-box graph can have an arbitrary
number of copies of the !-boxed nodes where every copy connects in the
same way to the nodes outside the !-box.




\begin{definition}[!-box graph]
  A \emph{!-box graph} is a pair $(G, {\cal B}_G)$ where $G$ is a
  graph and ${\cal B}_G$ is the graphs !-boxes which are is a set of
  disjoint subsets of $V_G$.\footnote{One could consider more
    expressive notions of nested, or overlapping, node sets in the
    !-boxes. While such expressivity is interesting, it is not
    required for the system we formalise here.} 
\end{definition}


\begin{definition}[!-box matching]
  we write $(G,{\cal B}_G) \leq_! (H, {\cal C})$, for $(G,{\cal
    B}_G)$ matches $(H, {\cal B}_H)$. This is a binary relation such
  that $(H, {\cal B}_H)$ can be obtained from $(G,{\cal B}_G)$ by the
  following operations on graphs, performed in order:

%
\begin{description}

\item[{\bf copy($c$,$(G,{\cal B}_G)$)}]: the function $c$ is mapping
  from ${\cal B}_G$ to natural numbers. Each bang box, $b$, is copied
  $c(b)$ times. Any edges between a node, $n$, inside a !-box $b$, and
  a node, $m$, outside it, get copied so that there is a new edge from
  $m$ to the new copy of $n$. When $c(b) = 0$, we call it
  \emph{killing} as all nodes in the !-box get removed with any
  incident edges. When $c(b)$ is 1, no additional copies are made and
  we allow ourself to omit this case when writing the function.

\item[{\bf drop($K$,$(G,{\cal B}_G)$)}]: removes the subset, $K$ of the
  !-boxes, but leaves their contents in the graph.
 
\item[{\bf merging($M$,$(G,{\cal B}_G)$)}]: given the set of disjoint
  subsets of unconnected !-boxes, $M$, merging simply unions the
  members. 

\end{description}

\noindent An illustration of matching with these operations is given
in Figure~\ref{fig:bang-box-example}.
\end{definition}

\begin{proposition}[!-box matching is partial order]
\label{thm:bang-box-po}
Reflexivity comes from the trivial matching (no killing, no copying,
no dropping and no merging). Transitivity can be proved by
constructing combined matching from two existing matches: killing a
!-box that was constructed from a copy simply avoids copying the !-box
in the first place, copying after merging simply involves additional
copying beforehand and merging at the end. Antisymmetry can be proved
by constructing from arbitrary matches $G \leq_!  H$ and $H \leq_! G$
the trivial matching. The construction simply involves removing any
killings.
\end{proposition}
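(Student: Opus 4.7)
My plan is to verify the three defining axioms of a partial order in turn, working directly from the definition of $\leq_!$ as the existence of a sequence of copy, drop, and merge operations in that fixed order.

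Reflexivity is witnessed by the \emph{trivial match}: set $c(b) = 1$ for every $b \in {\cal B}_G$, take the drop set $K = \emptyset$, and take the merge set $M = \emptyset$. Each phase is then a no-op, so the result is $(G,{\cal B}_G)$ itself and $G \leq_! G$.

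For transitivity, suppose $G \leq_! H$ via $\sigma_1 = (c_1, K_1, M_1)$ and $H \leq_! L$ via $\sigma_2 = (c_2, K_2, M_2)$. Concatenating these produces a six-phase schedule, and the work is to permute and combine phases into the required single copy-drop-merge shape, preserving the resulting !-box graph up to isomorphism. The normalisation rests on three observations: (i) a merge in $\sigma_1$ followed by a copy in $\sigma_2$ of the merged !-box can be simulated by first copying each constituent with the same multiplicity and then merging each group of copies; (ii) a drop in $\sigma_1$ followed by a copy in $\sigma_2$ is unproblematic, since $\sigma_2$'s copy can only act on !-boxes still present in $H$, which are untouched by the drop; and (iii) a kill (i.e.\ $c_2(b) = 0$) applied to a !-box produced by a copy in $\sigma_1$ can be absorbed by reducing the $\sigma_1$-multiplicity in the combined sequence. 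Applying these rewrites inductively on the length of the concatenated schedule yields a single $\sigma = (c, K, M)$ witnessing $G \leq_! L$.

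For antisymmetry, I would use monotone invariants. The multiset of \emph{non-!-box} vertices (the ``skeleton'') can only grow under matching, because copy and merge leave it unchanged while drop strictly enlarges it by moving !-box contents out; hence $G \leq_! H$ and $H \leq_! G$ together force the skeletons of $G$ and $H$ to coincide, so neither matching uses drop. A parallel count of distinct !-boxes, combined with the fact that merge only decreases that count and non-trivial copy only increases it, pins down the multiplicities and the merging partitions. Finally, by the observation used in (iii) above, any kill in the round-trip can be traded away by simply never performing the matching copy that produced its target; after such eliminations, the composed matching $G \leq_! G$ reduces to the trivial one, and tracing this back through both directions gives $G \iso H$. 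The main obstacle I expect is the bookkeeping in the transitivity normalisation, where a copy followed much later by a kill or merge effects a change that is awkward to decompose into the fixed copy-drop-merge order; a careful induction on the total number of operations in the concatenated sequence, together with case analysis on adjacent phases, should make this rigorous.
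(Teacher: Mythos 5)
Your reflexivity and transitivity arguments follow the paper's own sketch almost exactly: the trivial schedule for reflexivity, and for transitivity the same two key commutations the paper names (a kill applied to a !-box that arose from a copy is absorbed by lowering the copy multiplicity; a copy of a merged !-box is simulated by copying the constituents and merging the copies afterwards), supplemented by the routine drop/copy and drop/merge interchanges needed to restore the fixed copy--drop--merge order. That part is sound and, if anything, more careful than the paper's one-line version; your worry that transitivity is the hard part is misplaced.

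The genuine gap is in antisymmetry, and it is one your argument shares with the paper's. Eliminating kills from the composite $G \leq_! H \leq_! G$ shows only that the \emph{composite} normalises to the trivial matching of $G$ with itself; the final step ``tracing this back through both directions gives $G \iso H$'' does not follow, because the intermediate graph $H$ is not determined by the normalised composite. Concretely, let $G$ be a single vertex enclosed in one !-box $b$, and let $H = \mathrm{copy}(\{b \mapsto 2\}, G)$, i.e.\ two disjoint !-boxed copies of that vertex. Then $G \leq_! H$ via that copy, and $H \leq_! G$ by killing one of the two !-boxes and keeping the other, yet $G \not\iso H$ since they have different numbers of vertices. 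Your monotone invariants do not exclude this: both skeletons are empty (so no drop is forced out), and the !-box count is not monotone because a kill decreases it while a nontrivial copy increases it, so the two can cancel. What survives is antisymmetry up to the semantic equivalence $\binterp{G} = \binterp{H}$; to get antisymmetry up to isomorphism one must either exclude killing from the matching operations or quotient !-box graphs by such copy/kill redundancies, and either fix would need to be stated explicitly.
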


\begin{figure}[t]
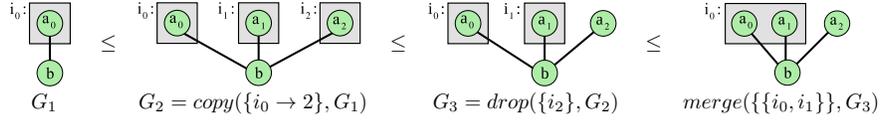

  \scalebox{0.8}{
    \begin{minipage}[c]{1.0\linewidth}
      \[
      \begin{array}{ccccccc}
        \inlinegraphic{4em}{bbox-ex1}
        & \leq &
        \inlinegraphic{4em}{bbox-ex2} 
        & \leq &
        \inlinegraphic{4em}{bbox-ex3}
        & \leq& 
        \inlinegraphic{4em}{bbox-ex4} \\
        \stackrel{\ }{G_1}
        & & 
        \stackrel{\ }{G_2 = copy(\{i_0\rightarrow2\},G_1)}
        & &
        \stackrel{\ }{G_3 = drop(\{i_2\}, G_2)}
        & & 
        \stackrel{\ }{merge(\{\{i_0,i_1\}\}, G_3)}
      \end{array}
      \]
    \end{minipage}
}
\caption{An illustration of !-box graph
  matching using the !-box operations. This involves first copying
  !-box $i_0$ twice, then merging $i_0$ and $i_1$ and finally dropping
  $i_3$.}
\label{fig:bang-box-example}
\end{figure}
 

We give a formal semantics to !-box graphs in terms of a set of graphs
in the underlying representation. In particular, we denote the
interpretation of a !-box graph $(G,{\cal B})$ by $\binterp{(G,{\cal
    B})}$ and say that its members are instances.

\begin{definition}[!-box Interpretation]
  $\binterp{(G,{\cal B})}$ is the set of graphs matched by the !-box
  graph that have no !-boxes: $\binterp{(G,{\cal B})} = \{ H \;\;|\;\; (G,{\cal B}) \leq_! (H,\emptyset) \}$
\end{definition}

Observe that every instance of a !-box graph can be defined
by pairing each !-box with the natural number that defines how many
copies are made of it. Thus $\binterp{G}$ is isomorphic to the set of
$k$-tuples of natural numbers, where $k$ is the number of !-boxes.
The need for the !-box matching operation, rather than using a direct
$k$-tuple interpretation, is to allow matching between !-box graphs,
and thus to provide a mechanism for derived rules.

\begin{proposition}
\label{thm:bang-box-respect}
\emph{!-Matching respects !-box semantics}: $G \leq_! H
\Leftrightarrow \binterp{G} \supseteq \binterp{H}$. The proof is a
simple consequence from the $\leq_!$ being a partial order and the
definition of $\binterp{G}$ being a subset of the graphs that match
$G$.
\end{proposition}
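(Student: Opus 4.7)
The plan is to handle the two directions of the biconditional separately, leaning on the partial-order structure of $\leq_!$ established in Proposition \ref{thm:bang-box-po}.

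For the forward direction $G \leq_! H \Rightarrow \binterp{G} \supseteq \binterp{H}$, the argument is pure transitivity. Given any $K \in \binterp{H}$, the definition of $\binterp{H}$ supplies a matching $H \leq_! (K, \emptyset)$. Composing with the hypothesis $G \leq_! H$ via the transitivity half of Proposition \ref{thm:bang-box-po} yields $G \leq_! (K, \emptyset)$, so $K \in \binterp{G}$.

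For the backward direction $\binterp{G} \supseteq \binterp{H} \Rightarrow G \leq_! H$, the idea is to probe $\binterp{H}$ with a sufficiently generic concrete instance from which the !-box structure of $H$ can be recovered and pulled back to $G$. Concretely, I would construct $H^\star \in \binterp{H}$ by assigning each !-box $b \in {\cal B}_H$ a distinct large copy multiplicity $k_b$ and then dropping every !-box. Since $H^\star \in \binterp{G}$ by hypothesis, there is a witness $G \leq_! H^\star$ giving a copy function $c$, drop set, and merging partition on $G$. From this witness, one reads off the operations transforming $G$ into $H$ rather than $H^\star$: each !-box $a$ of $G$ is associated with a subset $S_a \subseteq {\cal B}_H$ determined by which copies (of sizes $k_b$) populate the image of $a$'s interior; a merge on the $G$ side descends to a merge of the corresponding $S_a$'s, and $c(a)=0$ descends to dropping the $S_a$ directly in $H$.

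The main obstacle will be verifying that this extraction produces a genuine matching $G \leq_! H$, and not merely a matching into some blow-up of $H$. This forces the $k_b$ to be chosen so that each relevant count on the $G$ side decomposes uniquely as a combination $\sum_{b \in S_a} k_b$ (e.g.\ taking them linearly independent over $\mathbb{Z}$), that the edge and interior-node assignments along the canonical embedding into $H^\star$ are consistent across different !-boxes of $G$, and that the synthesised operations, when applied to $(G,{\cal B}_G)$, yield $(H,{\cal B}_H)$ up to isomorphism of !-box graphs. Once that is in place, antisymmetry from Proposition \ref{thm:bang-box-po} provides a useful sanity check: in the symmetric case $\binterp{G} = \binterp{H}$ the construction must collapse to the trivial matching in both directions, exhibiting the claimed biconditional as a corollary of $\leq_!$ being a partial order on !-box graphs.
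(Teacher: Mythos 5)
Your forward direction is exactly the paper's one\nobreakdash-line argument: $K \in \binterp{H}$ means $H \leq_! (K,\emptyset)$, and transitivity of $\leq_!$ (Proposition~\ref{thm:bang-box-po}) gives $G \leq_! (K,\emptyset)$, hence $K \in \binterp{G}$. That half is correct and is all the paper really proves.

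The backward direction is where the content of the proposition lies, and there your argument is a plan rather than a proof --- you flag the ``main obstacle'' yourself and do not resolve it, and the obstacle is genuine. A small point first: finitely many positive integers cannot be linearly independent over $\mathbb{Z}$; what you need is a family with distinct subset sums, which is fixable. More seriously, the extraction you describe --- associating to each !-box $a$ of $G$ a subset $S_a \subseteq {\cal B}_H$ and synthesising the witness as a merge of the $S_a$ --- does not cover all the ways a matching $G \leq_! H$ must arise. Take $G$ to be a single !-box containing one isolated vertex $v$, and $H$ to be that same !-box together with one additional unboxed copy of $v$. Then $\binterp{H}$ consists of the graphs with $k+1$ disjoint copies of $v$ for $k \geq 0$, so $\binterp{G} \supseteq \binterp{H}$, and indeed $G \leq_! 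H$; but the witness copies $G$'s box twice and then \emph{drops} one copy. It is not a merge of a subset of ${\cal B}_H$, and the copy count observed in your probe $H^\star$ is $k_b + 1$, which does not decompose as $\sum_{b\in S_a} k_b$. So the extraction must also synthesise drops to absorb unboxed material of $H$, and you must still show that the embedding data read off from the single instance $H^\star$ is consistent with $H$ itself --- a single generic instance does not in general determine the !-box structure that produced it (one box containing $v$ and two boxes each containing $v$ have identical interpretations). To be fair, the paper offers no argument for this direction either beyond the appeal to the partial order, so you are not overlooking a supplied proof; but as written your $\Leftarrow$ implication is incomplete.
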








Because !-box graphs correspond to a countably infinite number of
concrete graphs, matching cannot be implemented by simply unfolding
all interpretations. We now prove that matching is still decidable.

\begin{theorem}\label{thm:bang-box-decideable-matching}
  \emph{!-box graph Matching is decidable}. The key observation is
  that a graph, $G$, will never match a graph with fewer nodes except
  by killing. Thus the copying(and killing) operations on $G$ can be
  bounded by the number of nodes in the graph it is being matched
  against. While this gives a generate and test style algorithm, it is
  not efficient.  The intuition for an efficient algorithm is to
  search through $G$ incrementally increasing the matched part.
\end{theorem}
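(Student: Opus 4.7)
The plan is to reduce the matching problem $G \leq_! H$ to a brute-force search over a finite set of candidate applications of the three !-box operations, then verify each candidate using the (decidable) underlying graph matching of Section~\ref{sec:graphs}. If any candidate produces $(H,\mathcal{B}_H)$ we accept; otherwise we reject.

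First I would bound the copy function $c$. Because drop and merge leave the underlying vertex set unchanged, and copying with $c(b)\ge 2$ is the only way to create new vertices, the interior vertex count of the intermediate graph obtained from $(G,\mathcal{B}_G)$ is exactly $\sum_{b\in\mathcal{B}_G} c(b)|b| + |\Int G\setminus\bigcup\mathcal{B}_G|$. For this to produce $(H,\mathcal{B}_H)$ in the end, this count must equal $|\Int H|$. In particular, for every !-box $b$ with $|b|\ge 1$ we need $c(b)\le|V_H|$, and !-boxes containing no vertices may be ignored. Thus $c$ ranges over at most $(|V_H|+1)^{|\mathcal{B}_G|}$ functions; the dropped set $K$ ranges over $2^{|\mathcal{B}_G|}$ subsets; and the merge parameter $M$, a partition of the remaining !-boxes into disjoint sets of pairwise unconnected !-boxes, ranges over at most $B_{|\mathcal{B}_G|}$ possibilities (the corresponding Bell number). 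The set of candidate triples $(c,K,M)$ is therefore finite.

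Next, for each candidate triple I would apply the operations in the prescribed order (copy, then drop, then merge) to obtain a concrete !-box graph $G'$, and test whether $G'\iso(H,\mathcal{B}_H)$ as !-box graphs, i.e.\ whether there is an isomorphism of the underlying graphs carrying !-boxes to !-boxes. Since both graphs are finite, this test is decidable by enumerating bijections on vertices and edges and checking compatibility with $s$, $t$ and the !-box partitions. The overall procedure returns ``yes'' iff some triple succeeds.

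The main obstacle will be completeness: I have to show that every witnessing match can be normalised to a triple with $c(b)\le|V_H|$ and operations applied in the given order, so that my search misses no successful match. The needed commutation facts are already in the spirit of Proposition~\ref{thm:bang-box-po}'s transitivity argument: killing absorbs any prior copy of the same !-box, merging commutes with copying of unrelated !-boxes, and dropping commutes with both since it is purely bookkeeping. Verifying these commutations carefully is where the real work lies; once established, the finite bounds above yield decidability. As a remark, the efficient algorithm hinted at in the statement would avoid the exponential enumeration of triples by incrementally extending a partial exact embedding of $\Relax{G'}$ into $\Relax{H}$, expanding each !-box only as far as the partial match forces.
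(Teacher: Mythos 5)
Your proposal is correct and follows essentially the same route as the paper: the paper's entire argument is the observation that copying (hence killing) can be bounded by the number of nodes in the target graph, yielding a finite generate-and-test procedure, which is exactly your enumeration of candidate triples $(c,K,M)$ followed by an isomorphism check. You supply considerably more detail than the paper does (the explicit vertex-count bound, the finiteness of the drop and merge parameters, and the completeness concern), but the key idea is identical.
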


\section{Reasoning with Graph Patterns}
\label{sec:rewriting}
\label{sec:patterns}


The representation formed by adding !-boxes to e-graphs, which we call
\emph{graph patterns}, allows us to express, in a finite way, certain
infinite families of equations between e-graphs. In particular, the
Spider Theorem can now be represented as shown in
Figure~\ref{fig:formal-spider}. We now define graph patterns and then
describe how they can be used to develop a formal system for reasoning
about compact closed categories. 

\begin{figure}[t]
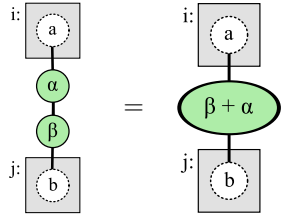

$$\begin{array}{ccc}
\inlinegraphic{8em}{spider_lhs_patt} & \;=\; & \inlinegraphic{8em}{spider_rhs_patt}
\end{array}$$
\caption{The Spider Theorem, from \S\ref{sec:informal-spider}, expressed formally using graph
  patterns. The !-boxes are named $i$ and $j$.  The variable nodes are
  white and named $a$ and $b$. The non-variable node data (the angle)
  is written inside the node when non-zero. }\label{fig:formal-spider}
\end{figure}


\begin{definition}[Graph Pattern] A graph pattern $G$ is a !-boxed
  e-graph, i.e. a pair $(eg(G), {\cal B}_G)$, where $eg(G)$ is an
  e-graph and ${\cal B}_G$ is the graph patterns !-boxes.
\end{definition}

\begin{definition}[Graph Pattern Interpretation] A graph pattern $G$
  represents the set of open concrete graphs: $\minterp{G} =
  \bigcup\{\vinterp{G'}\;.\; G' \in \binterp{G}\}$
\end{definition}

This definition allows us to lift matching and plugging from e-graphs
through !-boxes to develop analogous definitions for graph patterns.

\begin{definition}[Graph Pattern Matching] We write $G \leq_p H$ for
  graph pattern $G$ matches graph pattern $H$ and define it to be:

$$G \leq_p H = \exists G'.\, G \leq_! G' \land G' \leq_e H$$
\end{definition}

\begin{definition}[Graph Pattern Plugging] Plugging together of two
  graph patterns is an extension of e-graph plugging where !-boxes
  membership is respected. It restricts identification of nodes in a
  plugging to cases when their !-boxes are can also be identified:
  $\pi_{p_1}^{p_2}(G_1,G_2) = H$ only when, if $b_i \in {\cal
    B}_{G_i}$ and $\exists v.\;p_i(v) \in b_i$ then $\forall w \in
  b_i.\; \exists v'.  p_i(v') = w \land \exists b_j \in {\cal
    B}_{G_j}.\; p_j(v') \in b_j$. This identifies the !-boxes $b_i$
  and $b_j$ which must each come from a distinct one of $G_1$ or
  $G_2$.

\end{definition}

These definitions allow the properties of plugging for e-graphs to
lift naturally to graph patterns. 

The language of graph patterns forms a \emph{meta-level} framework for
reasoning about compact closed categories. The meta-level provides
generic machinery to manipulate graphs and derive new rules.
Following the terminology of logical frameworks, such as
Isabelle~\cite{isabelle}, we call specification of additional
structure, beyond the meta-level, the \emph{object-level}. In our
setting, this involves providing a set of equations between graphs.
These equations are the axioms for the object system. For example, in
\S\ref{sec:case-study} we define an object level theory for reasoning
about quantum computation based on the graphical calculus introduced
in \S\ref{sec:exampl-quant}. In addition to the axioms, the object
level can also provides an appropriate matching or unification operation
for data in the nodes and edges.

We now describe the meta-level framework, noting the conditions for a
rule to be valid, and prove the systems adequacy. The resulting system
forms the basis for an interactive proof assistant that supports
reasoning compact closed categories.

\subsection{Equational Rules}

In our framework, the axioms defined by an object-level model, as well
as derived rules, are pairs of graph patterns. The elements of the
pair represent the left and right hand sides of an equation. Rules are
declarative in that they denote a set of equations between the
underlying formalism of concrete graphs.

The intuitive idea of substitution with a rule is to replace a
subgraph that matches the left hand side with the rule's right hand
side.  However, not all pairs of graphs make a valid rule with respect
to the underlying semantics. For an equation to be well defined with
respect to the compact closed structure it must not be possible to
change the type (the boundary nodes in the domain and co-domain) of an
concrete graph graph by rewriting.  Mapping this restriction back to
rules on graph pattern results in the following conditions:

\begin{itemize}

\item There has to be a isomorphism between exterior nodes in the left
  and right hand sides.

\item Rules must also define a partial mapping between
  !-boxes on the left and right hand sides. The intuition for this
  mapping is that the unfolding used when matching a !-box on the
  left, is applied to the mapped !-box on the right before
  replacement.

\item When an exterior node appears within a !-box on one side of a
  rule, it must also appear under a mapped !-box on the other side.

\end{itemize}

For notational convenience, we annotate !-boxes and exterior nodes in
a graph with unique names. For example, see Figure~\ref{fig:formal-spider}
which shows the Spider Theorem, where the mapping between !-boxes is
represented by !-boxes having the same. Similarly, the isomorphism
between exterior nodes is captured by the set of exterior node names
being equal.

\subsection{Meta-Level Logic and Derived Rules}

Having defined what makes a valid rule, we now present the meta-logic
of the framework. This is quite simple as it only involves dealing
with equations:

\begin{center}
\prooftree
A = B \in \Gamma
\justifies
\Gamma \vdash A = B
\using\mbox{trivial}
\endprooftree
\quad\quad
\prooftree
\justifies
\Gamma \vdash A = A
\using\mbox{refl}
\endprooftree 
\quad\quad
\prooftree
\Gamma \vdash A = B
\justifies
\Gamma \vdash B = A
\using\mbox{sym}
\endprooftree
\begin{center}
\end{center}
\prooftree
\Gamma \vdash A = B \quad\quad
\Gamma \vdash C = D
\justifies
\Gamma \vdash (C = D[A/B])\theta
\using\mbox{subst}
\endprooftree
\quad\quad
\prooftree
\Gamma \vdash A = B \quad\quad
\Gamma \vdash C = D
\justifies
\Gamma \vdash \pi(A,C) = \pi(B,D)
\using\mbox{plug}
\endprooftree
\end{center}

\noindent where $\Gamma$ is the set of object-level axioms, $D[A/B]$
is the graph $D$ with a matching of $A$ replaced by $B$, and $\theta$
is a matching or unification result defined by object level matching
for the node and edge data.

For the reflexivity rule (\emph{refl}), we assume that $A$ is a
well-formed pattern graph. This rule allows a new graph to be
introduced. The \emph{plug} rule allows graphs to be put together to
form larger graphs by plugging in an analogous way to composition in
functional programming. By then applying the \emph{subst} rule,
intermediate results are derived which can themselves be used to
rewrite other rules and conjectures. Given that the axioms in $\Gamma$
meet the validity conditions described earlier, the rules all preserve
the validity of equations and thus the system as a whole ensures only
valid rules are derived.

Given an object-level formalism, a set of equations can be applied
automatically to simplify a graph or simulate computation. For such
rewriting to terminate, a suitable left-to-right ordering on rules
needs to be observed, such as a decrease in the size of the graph. An
initial study into this issue has been investigated by
Kissinger~\cite{Kissinger08Msc}. An example of simulating a quantum
computation is given in~\S\ref{sec:case-study}.

\subsection{Lifting Axioms and Adequacy} 

The axioms of an object formalism come from the semantics of the
underlying system. For instance, the equations given in
Figure~\ref{fig:graph-quant-eqns} can be proved by matrix calculations
in the underlying model. When such rules are expressed as graph
patterns, we replace the concrete representation's boundary nodes with
exterior nodes. This operation is called \emph{lifting}. When a rule
contains exterior nodes, the equation on graph patterns corresponds to
an infinite family of equations between concrete graphs. Thus we might
worry that the lifted equations express too much: they may allow
rewrites which are not true. We call the property that the lifted
representation is a conservative extension of the initial theory
\emph{adequacy}. For models of compact closed categories, the proof of
adequacy is quite simple: given an equation between concrete graphs,
$G = H$, we observe that every instance of the lifted equation has a
subgraph matching the original equation such that the instance can be
derived by plugging.

\section{A Case Study in Quantum Computation}
\label{sec:case-study}

The model of quantum computation introduced in
\S\ref{sec:exampl-quant} provides an object formalism for our
meta-level framework. In particular, the object level axioms come from
lifting the equations in Figure~\ref{fig:graph-quant-eqns} and from
the formalisation of the Spider Theorem in
Figure~\ref{fig:formal-spider}.  Our model of quantum computation
requires no data for the edges. The nodes on the other hand are either
$H$ (a Hadamard gate) with no additional data, or a Pauli operator
which has an \emph{angle} and a \emph{colour}. The colour is red
operations on the $X$ basis and a lighter green for those the $Z$
basis. For their part, angles are expressed as rational numbers which
correspond to the coefficient of $\pi$ in the underlying matrix.

To allow composition of rules to compute the resulting angles we give
the $X$ and $Z$ nodes an \emph{angle expression}. When a node is
within a !-box, the expression is a single \emph{angle-variable} which
gets instantiated to a new angle-variable in each of the unfoldings of
the !-box.  When a node is not within a !-box, the angle-expression is
a mapping from a set of angle-variables to the corresponding rational
coefficient. When an angle-expression contains an angle-variable
within a !-box, this is interpreted as a sum of the variables that
result from its unfolding. 

This rather simple expression language has a normal form by ordering
the angle-variable by name.  Matching then results in angle-variables
being instantiated and the expressions in all affected nodes are then
(re)normalised. An additional implementation detail must also be
observed for the substitution rule: it must ensure that
angle-variables in the rule being applied are distinct from those in
the expression being rewritten.

The quantum Fourier transform is among the most important quantum
algorithms, forming an essential part of Shor's
algorithm~\cite{Shor:PolyTimeFact:1997}, famous for providing
polynomial factoring. In our graph pattern calculus this circuit
becomes the top-left graph in Figure~\ref{fig:quantum-transform}. This
figure shows how computation can be symbolically performed by
rewriting with the lifted equations from
Figure~\ref{fig:graph-quant-eqns} and the graph pattern version of the
Spider Theorem.


\begin{figure}[t]
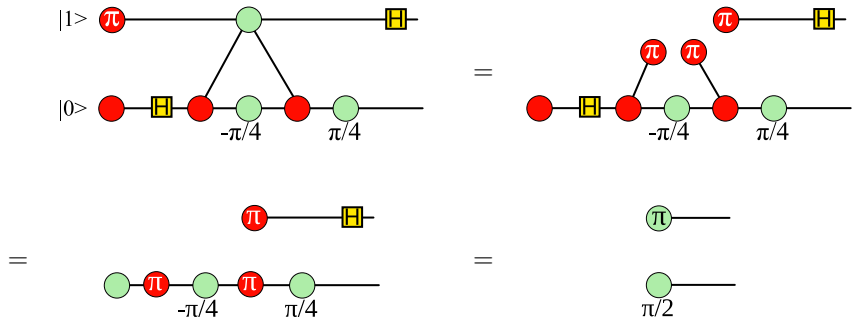

\begin{tabular}{cccc}
& \inlinegraphic{1.8cm}{qft2} &=& \inlinegraphic{1.8cm}{qft3} \\ 
& & & \\
& & & \\
=& \inlinegraphic{1.5cm}{qft6} &=& \inlinegraphic{1.5cm}{qft10} \\
\end{tabular}
\caption{An example computation of the
  Quantum Fourier Transform with inputs 1 and 0, performed
  symbolically by rewriting. }
\label{fig:quantum-transform}
\end{figure}

\section{Related Work}
\label{sec:relatedwork}

There are several foundational approaches to graph transformation,
including algebraic approaches~\cite{corradini97algebraic}, node-label
controlled~\cite{Graphgrammars83}, matrix
based~\cite{DBLP:conf/gg/VelascoL06}, and programmed graph
replacement~\cite{progress97}. These provide general ways of
understanding graph transformations which can then be implemented to
provide machinery for a specific application.  However, systems based
on these theories do not provide machinery for the semantics of
compact closed categories. The distinctive feature of our form of
graph rewriting is that the graphs capture the structural properties
of compact closed categories and our formalism provides a
compositional form of rewriting is: it preserves the type of the
rewritten subgraph.  This allows us to define a plugging operation
over which rewriting distributes.

Bundy and Richardson have described an account of ellipses notation
for lists~\cite{Bundy99proofsabout}. Various authors have also
considered ellipses representations for
matrices~\cite{SextonSorge05,Pollet04intuitiveand}, and more recently,
Prince, Ghani and McBride have developed a general formalism for
ellipses notation using Containers~\cite{conf/flops/PrinceGM08}.
Providing machinery for rewriting of graphs with ellipses notation,
which is needed to represent the Spider Theorem, is a novel
contribution of our approach to graph rewriting.

We note that our graphical notation has little connection
to \emph{graph states} as used in various approaches to
measurement-based quantum computation \cite{Raussendorf-2001}.  In
that approach the graph structure is used to provide a description of
the entanglement in a state:  it does not provide a complete
description of a computation.



\section{Conclusions and Further Work}
\label{sec:conclusions}


We extended the representation of compact closed categories as graphs
to provide a more expressive account of the interface offered by an
open graph. This representation enjoys a plugging operation that has
sequential and parallel composition as special cases. We also
described a formalism for ellipses notations on graphs and showed that
matching is decidable. These representations, together, provide a rich
language of \emph{graph patterns}. This provides the foundation for a
simple meta-logic for reasoning about models of compact closed
categories. 

We use the graphical language extend existing graphical calculi for
quantum computation. In particular, informal reasoning with graphical
equations that contain ellipses notation, such as the Spider Theorem,
now have a formal graphical representation. We illustrate this by
showing how computation can be performed by symbolic graphical
rewriting.

We are left with several exciting avenues for further research.  The
most immediate direction we are pursuing is to provide a full
implementation - only a partial one is currently
available\footnote{\url{http://dream.inf.ed.ac.uk/projects/quantomatic}}.
Other areas of further work include considering confluence results for
sets of rewrite rules, increasing the expressiveness of the
representation for graph-patterns, and finding a complete set of
rewrite rules for the considered model of quantum computation.


%

\bibliographystyle{plain}
\bibliography{bibfile}

\end{document}